\documentclass[aps,pre,reprint,superscriptaddress]{revtex4-2}

\usepackage[T1]{fontenc}
\usepackage{mathptmx}
\usepackage{amsmath,amssymb,amsthm,bm}
\usepackage{graphicx}
\usepackage{booktabs}
\usepackage{multirow}
\usepackage{natbib}

\usepackage[version=3]{mhchem}
\usepackage{chemformula}
\usepackage{orcidlink}

\usepackage{hyperref}
\usepackage[normalem]{ulem}

\newtheorem{theorem}{Theorem}

\newtheorem{remark}{Remark}

\newcommand*{\addref}[1]{}
\newcommand*{\noter}[1]{}
\newcommand*{\pagelimit}[1]{}

\newcommand*{\weicomm}[1]{}
\newcommand*{\sidcomm}[1]{}
\newcommand{\added}[1]{#1}
\newcommand{\deleted}[1]{}

\newcommand{\reviewnote}[1]{}

\begin{document}

\preprint{APS/123-QED}

\title{Kolmogorov--Sinai entropies identify optimal observables for prediction and dynamics reconstruction in chaotic systems}

\author{Maximilian Topel\,\orcidlink{0000-0003-4374-7668}}
\email{maxtopel@northwestern.edu}
\affiliation{
 Department of Applied Mathematics, Northwestern University,  145 Sheridan Rd, Evanston, IL 60208, USA
}

\begin{abstract}
Choosing the optimal observable to model dynamical systems for which we do not know the driving equations is nearly always an ad hoc art. Embedding theorems provide theoretical insight into the relationship between time series observations of a system and the dynamics that generate them. Takens' Delay Embedding Theorem guarantees a diffeomorphism between delay-coordinate vectors built from generic scalar observables and the underlying invariant attractors from which they were generated. However, Takens' Theorem is agnostic to the choice of observable, and formal bounds on the quality of reconstruction across observables are not known. Here we prove that, under modest technical conditions, the Kolmogorov-Sinai entropy of an observable predicts its reconstruction error of the underlying dynamics in chaotic, ergodic systems. Using the Oseledets Multiplicative Ergodic Theorem, we demonstrate that the tangent bundles of the reconstructed manifolds admit an invariant Oseledets filtration diffeomorphically related across admissible observables. The Lyapunov exponents associated with each subspace then control the propagation of perturbations through the manifold as a whole. We bound reconstruction errors as a function of the contributions from each Oseledets subspace, showing that they are bounded above by a quantity monotonically related to the sum of positive Lyapunov exponents and, by the Ruelle inequality, the Kolmogorov-Sinai entropy. We validate this finding  empirically on the Lorenz-63 attractor, the Hastings--Powell food chain, and a tetracosane molecular-dynamics trajectory, recovering Spearman rank correlations between $h^{KS,UB}$ and reconstruction RMSE up to $\rho=+0.89$ ($p=5.5\times 10^{-8}$) for the realistic tetracosane system, sharpening  $\rho=+0.97$ under added measurement noise reflecting the asymmetric nature of noise on observables of the same dynamics. This work provides a rigorous foundation for observable selection in chaotic systems, applicable as an \emph{a priori} data-selection criterion for any data-driven modeling pipeline.

\end{abstract}

\keywords{Kolmogorov--Sinai entropy; Takens embedding; observable selection; attractor reconstruction}

\maketitle

\section{Introduction}

Advances in the fields of complex systems \cite{strogatz,Barabasi2002,Mitchell2009} and machine learning \cite{Goodfellow2015} alike have enabled the study of complicated, interconnected and noisy systems from population dynamics \cite{May1976} to consumer demand functions \cite{Varian2014}. These tools have put the study and prediction of complex systems comfortably within the reach of anyone with access to compute. However, these marked advances in learning technologies have not, by and large, made use of the rich theoretical and analytical frameworks built by dynamical systems theorists from the 1970s-1990s in spite of their natural connection with modern learning problems. Data-driven learning of dynamics and control have become increasingly popular both mathematically and computationally through methods like recurrent neural networks (Hamilton-Jacobi Bellman equations), Koopman/Dynamic Mode Decomposition frameworks and SINDy \cite{Mezic2005spectral,Williams2015EDMD,Brunton2016SINDY}. However, such efforts have not been met with commensurate efforts to optimize the basis choice upon which we build such models. Within a sea of plausible time series observations of dynamics coming from any number of data sources, deciding which to use can be tricky. Some works have considered explicit symbolic-Jacobian observability coefficients \cite{letellier2002investigating,letellier2006how,aguirre2005observability,aguirre2009modeling} but these approaches do not extend to situations where the underlying driving equations are not known. 

Which observables do we choose to do modeling? If dynamical systems encompass every process evolving in time according to a rule set, our field reaches far beyond systems for which we have known ODEs. Modeling often uses heuristic or intuitive choices but in the face of ever more complex dynamics, we need principled ways of selecting the observations we use to make models. In this work, we supply a principled answer rooted in classical ergodic theory and the geometrical structure inherent in time series. While embedding theorems provide us with the recipes for constructing geometry from time series observables, they do not tell us what choice is optimal. Moreover, the individual nature of each and every observable choice opens the door to bending, squashing, shearing and twisting of such structures meaning that distances on such geometric objects are not guaranteed to be the same. We propose how and prove why considering a measure of these deformations, the Kolmogorov-Sinai Entropy (KSE), can be used to predictively select optimal time series to feed into machine learning platforms modeling chaotic ergodic systems.

In previous work, it has been shown that Kolmogorov-Sinai entropies have strong physical relations to physical and statistical quantities describing a dynamical system. In Refs. \cite{PhysRevLett.82.520,PhysRevE.62.6516} it is shown that the Kolmogorov-Sinai entropy is related to thermodynamic entropy in molecular systems. Ref. \cite{PhysRevLett.81.1762} discusses the mathematical relationship between statistical entropy and the Kolmogorov-Sinai entropy in chaotic dynamical systems more generally. Ref. \cite{PhysRevE.71.046211} shows the Kolmogorov-Sinai entropies correspond to the average of the logarithm of the rate of expansion part of tangent space (stretching factor). These innovations have demonstrated that the Kolmogorov Sinai entropies are mathematically tied to geometrical and statistical descriptions of dynamics. In this work we build upon this understanding and formalism to bound the quality of invariant manifold or attractor reconstruction and consequently backmapping.

We apply a combination of geometry, embedding theorems and dynamics to first define the problem of time series selection, introduce the notion of geometries from time series, discuss differences in the manifolds describing dynamics of each and finally prove how the KSE can be used to optimize time series selection. In Sect. \ref{KSE:sec:ergodic}, we introduce the notion of ergodic dynamics and invariant manifolds arising from complex dynamics. In Sect. \ref{KSE:sec:takens}, we discuss Takens' Delay Embedding Theorem and its application in ergodic dynamics. In Sect. \ref{KSE:sec:distort}, we describe types of geometrical distortion to attractors constructed from time series. In Sect. \ref{KSE:sec:omet}, we introduce Oseledets Multiplicative Ergodic Theorem and its application to Takens embeddings under certain conditions. In Sect. \ref{KSE:sec:lyap}, we define Lyapunov exponents and the Kolmogorov-Sinai entropy. In Sect. \ref{KSE:sec:TandP}, we prove that under certain modest technical conditions, Kolmogorov-Sinai entropy predictively selects time series that minimize reconstruction errors when backmapping to full dimensional dynamics. Finally, in Sect. \ref{KSE:sec:empirical}, we demonstrate this theoretical effect on three dynamical systems of varying complexity to illustrate its efficacy in selecting observables for reconstruction.

\section{Ergodic Dynamics and Attractors} \label{KSE:sec:ergodic}
%define ergodicity
We can define the accessible phase space of a dynamical system $\mathbb{S}$ in terms of its measurable phase space given by the triple $(X,\Sigma,\mu)$ where $X$ defines accessible states, $\Sigma$ specifies the $\sigma$-algebra consisting of all measurable subsets in $X$, and $\mu$ is the probability measure defining the probability of measurable sets in $X$ \cite{Walters1982,CornfeldSinaiFomin1982}. We can also define some measure preserving transformation $T: X\rightarrow X$ that propagates $X$ through $\mathbb{S}$. $T$ is said to be ergodic if for every invariant set $A\in\Sigma$ satisfying $T^{-1}(A)=A$, $\mu(A) = 0$ or $\mu(A) = 1$ \cite{Walters1982}.

%Liouville
For instance, in classical mechanics, Liouville's Theorem guarantees that there exists a time evolution map given by Hamilton's equations of motion where phase space volume is preserved under the time evolution \cite{Arnold1989}. More formally, for any measurable set $A\subseteq X$, $\mu(A)$ remains constant under time evolution, $\mu(T^t(A))$. This transformation is said to be symplectic (volume-preserving) \cite{Arnold1989}. It may also be ergodic with respect to the Liouville measure; however, constraints on dynamics can result in the emergence of restrictions to accessible phase space.

%introduce attractors
The existence of conserved quantities such as energy in classical physical systems results in the emergence of lower dimensional surfaces (invariant manifolds) in phase space that can effectively describe system dynamics \cite{Noether1918}. For instance, an N-atom molecular system evolving in some high dimensional phase space $\mathbb{R}^{6N}$ has an effective description on an invariant manifold of $k\ll6N$ dimensions. These dynamics are restricted to $\mathbb{R}^{6N-1}$ by energy constraints and then can be further confined to $\mathbb{R}^{k}$ as a result of atomic interactions such as electrostatic forces \cite{Wales2003}.  We term this subset of phase space the \textit{attractor}, although this term is also used for manifolds emerging from dissipative dynamics where trajectories converge. We extend this usage to describe any low dimensional and invariant subset of an original high dimensional description. In this work, we consider the broader context of both dissipative and chaotic systems giving rise to attractors. Such systems may have an invariant Sinai--Ruelle--Bowen (SRB) measure describing the statistical behavior of chaotic attractors which are typically ergodic on systems where they are defined \cite{Young2002,Ruelle1979}. These attractors have been shown to be robust to perturbations as in Kolmogorov--Arnold--Moser (KAM) theory \cite{Arnold1989} and in the works of Stark and Broomhead under both stochastic and deterministic forcing \cite{stark1999delay,stark2003delay,Broomhead1986}. If we constrain $X$ to these invariant attractors, we can often define some transformation $T$ that is ergodic with respect to a measure $\mu$ on these invariant manifolds \cite{Walters1982,RuelleEckmann}.

\section{Takens' Delay Embedding Theorem}  \label{KSE:sec:takens}

%introduce Takens
Takens' Delay Embedding Theorem states that for any dynamical system, $\mathbb{S}$, with smooth dynamics describable by a compact attractor (i.e. on an invariant manifold), for which we can record a time series observation of its dynamics, vectors constructed by recording time delayed scalar observables of $\mathbb{S}$ constitute an embedding of the attractor of $\mathbb{S}$ onto $\mathbb{R}^{2m+1}$ where $m$ is the dimensionality of the attractor under modest technical conditions \cite{Takens}. These vectors have form $y(t) = [\mathrm{o}_t,\mathrm{o}_{t-\tau}, \dots, \mathrm{o}_{t-2m\tau}] \in \mathbb{R}^{2m+1}$ and their embedding describes the geometry and dynamical properties of $\mathbb{S}$ up to an \textit{a priori} unknown diffeomorphism \cite{Takens}.

This powerful theorem can be used to extract robust descriptions of a dynamical system from time series observations of that system. This is particularly convenient for complex dynamics where the forces or rules that propagate that system are unknown, or for large interacting systems like molecules, where propagating those rules in a full-dimensional form is very expensive. While this theorem allows us to identify these intrinsic manifolds describing dynamics, these manifolds do not necessarily have ergodic measures. There may exist multiple invariant components to these manifolds and ergodicity requires that these systems not be decomposable. That is to say, for an ergodic measure $\mu$ on the map $T$ describing all accessible states $X$ for $(X,\Sigma,\mu)$, there cannot be some  $A \subset X$ that is not accessible by the propagation of any other subset $A^{\prime}\subset X$, $T^t(A^{\prime})$. If there were multiple closed sets in $X$, $X$ could be subdivided into multiple subsets where $\mu(A)= 0,1$, violating ergodicity. In this case, an invariant measure could be composed on each closed subset of $X$, however the decomposability would violate the requirements of ergodicity.

Here we will appeal to two facts. 1) If we collect continuously observed time series observations of a dynamical system $S$ then we typically only capture the states within an invariant subset of the phase space \cite{strogatz}. We can then apply ergodicity to this subset of phase space if there is an ergodic invariant measure on it. It should be noted that stochastic or deterministic forcing causing a jump to a different invariant manifold is possible in the case where there exist multiple invariant manifolds describing dynamics for forced systems \cite{Grebogi1983}.  2) It is often the case that the chaotic dynamical systems we discuss in this work enjoy an invariant Sinai--Ruelle--Bowen measure describing that is typically ergodic on systems where they are defined \cite{Young2002}. This SRB measure is considered to be physical because it describes statistical behavior congruent on almost any initial conditions characteristic of the attractor \cite{Young2002,RuelleEckmann}.

\section{Distortions to Embedded Manifolds}  \label{KSE:sec:distort}

 Our discussion of both Takens' Theorem and dynamics has been somewhat idealized. In the real world, our time series observations may be subject to noise or deterministic forcing. Moreover, chaotic systems are subject to "deterministic noise", small differences in initial conditions resulting in exponentially different trajectories over time \cite{RuelleEckmann}. Consequently, although a system may evolve deterministically, the time series observations of that system may resemble stochastically defined noise. This high degree of sensitivity to initial conditions for chaotic systems means that any imprecision in our definition of a state $x \in X$ can lead to a mischaracterization of the embedding $\mathcal{M}\subseteq\mathbb{R}^{2m+1}$ we can produce from Takens' Theorem. Possible sources of error abound. Stochastic noise from recording equipment, deterministic forcing from a second and separable form of dynamics influencing our system or even errors determining the correct value of $x$ due to errors accrued solving differential equations numerically over long time scales can limit our ability to correctly and uniquely define the set of accessible states $X$ on an attractor over time \cite{KantzSchreiber2004}.

It is fortunate, then, that the works of Stark and Broomhead  \cite{Broomhead1986,stark1999delay,stark2003delay} have demonstrated the robustness of Takens' Theorem to such deterministic and stochastic forcing  as well as impact of noise on state space reconstruction in Casdagli \emph{et al.}\ \cite{Casdagli1991statespace} and Schreiber and Grassberger \cite{SchreiberGrassberger1991}. However, we expect that the degree of sensitivity of any one of our potential time series observations to noise will result in distortions to the embedding $\mathcal{M}^{\prime}\subseteq\mathbb{R}^{2m+1}$ that will be a function of individual manifold topography. Consequently, while Takens' Theorem applies to generic observables, in practice, observable choice can affect the quality of modeling performed using Takens delay vectors constructed from that time series \cite{KantzSchreiber2004,Takens,sauer1991embedology}. While observable selection methodologies \cite{letellier2002investigating,letellier2006how,aguirre2005observability,aguirre2009modeling}, formal bounds on reconstruction error rooted in the dynamics exhibited by these observables are, to our knowledge, absent.

We can, however, classify the types of distortions to latent space representations of dynamics that can arise in manifolds constructed from Takens' Theorem. While Takens delay embeddings are guaranteed to be diffeomorphically related to manifolds describing full dimensional dynamics, they are subject to smooth deformations of their topography including stretching, squashing, bending, local warping, twisting and shearing \cite{HirschSmaleDevaney2003}. These will result in deformations to the intrinsic geometry of the manifold, meaning that there will be changes to the distances between embedded points on $\mathcal{M}'$ produced by Takens' Embedding.

\begin{table*}[ht!]
\centering
\resizebox{\textwidth}{!}{
\begin{tabular}{l p{8cm} c}
\hline
\textbf{Deformation Type} & \textbf{Description} & \textbf{Intrinsic Geometry Conserved} \\
\hline
Stretching/Squashing & Expansion or compression of regions of phase space. & No \\
Bending & Smooth curving that does not induce folds. & Yes \\
Local Warping & Small deformations on the surface. & No \\
Twisting & Smooth rotation. & Yes \\
Shearing & Deformation in angles and distances. & No \\
\hline
\end{tabular}
}
\caption{Types of deformations of the manifold $\mathcal{M}'$ under $C^1$ equivalence as guaranteed by Takens' Theorem that may vary with observable choice \cite{HirschSmaleDevaney2003,Takens,sauer1991embedology}.}
\label{KSE:table:deformations}
\end{table*}

Any \textit{intrinsic} geometrical deformation alters distances on the manifold while \textit{extrinsic} ones may visually affect the manifold but leave distances unchanged \cite{HirschSmaleDevaney2003}. Minimizing intrinsic distortions leads to more faithful representations of system dynamics. Moreover, larger deformations present the opportunity for greater sensitivity to mischaracterization of $r^{\prime}(t)\in\mathcal{M}^{\prime}$ and consequently, observable selection should be done in a fashion that minimizes such deformations. In the following section, we apply Oseledets Multiplicative Ergodic Theorem to study the rate of such distortions.

\section{Oseledets Multiplicative Ergodic Theorem} \label{KSE:sec:omet}

It is now incumbent upon us to study the rate at which the reconstructed manifold $\mathcal{M}^{\prime}\subseteq \mathbb{R}^k$ produced by Takens' Theorem is distorted. We assume first that our observations describe an effective phase space described by an attractor, $\mathcal{M} \subseteq \mathbb{R}^k$, and that the time evolution of state $r(t)\in \mathcal{M}$ is described by transformation $T$ ergodic on the invariant measure $\mu$. We contend that this is a fair assumption for a wide variety of chaotic dynamical systems as discussed in the previous sections. Moreover, ergodicity ensures that time averages converge to ensemble averages via Birkhoff's Ergodicity Theorem, ensuring reliability of the statistical properties derived from observations \cite{Birkhoff1931}.

Oseledets multiplicative ergodic theorem (OMET) provides the convergence criterion in the rates of growth or shrinkage of the tangent space (or bundle) at almost every point on the attractor $\mathcal{M} \subseteq \mathbb{R}^{2m+1}$ \cite{Oseledets1968}. It is multiplicative as it deals with the product of linear operators and is vector-valued; that is to say it describes time averaged behavior of the evolution of a vector under a sequence of linear transformations to our system \cite{Arnold1998}. For some linear operator $A(r)$ operating on the orbit $T^j(r)$ acting on vector $r(t)\in\mathcal{M}\subseteq\mathbb{R}^{2m+1}$, there exist statistical limits to the evolution of $r$ propagated by $A_t(r) = A\bigl(T^{t-1}(r)\bigr) \cdots A\bigl(T(r)\bigr) A(r)$. This operator corresponds to the cocycle over the base dynamics, $T$.  OMET tells us that we can subdivide the tangent space of the attractor $\mathcal{M} \subseteq \mathbb{R}^{2m+1}$ into a filtration of subspaces $\{0\}\subseteq V_0\subseteq V_1\subseteq...\subseteq V_q\subseteq V_{q+1}\subseteq...\subseteq V_K\subseteq\mathbb{R}^{2m+1}$  for which there exist a corresponding set of Lyapunov exponents $ \lambda_1\geq \lambda_2 \geq ... \geq \lambda_K$ measuring rate of growth or shrinkage of each subspace under  the transformation $A$ \cite{Oseledets1968,Arnold1998}.  Moreover, these rates of growth and shrinkage are given by the limit behavior of the evolution of our system at long timescale:

\begin{equation}
    \lim_{t\rightarrow\infty}\dfrac{1}{t}\log\|A_t(r)v\|= \lambda_q
\end{equation}

where $v \in V_q\backslash V_{q-1}$ is a nonzero vector in subspace $V_q$ but not $V_{q-1}$ \cite{Oseledets1968,Arnold1998}.

We can equivalently apply OMET to some manifold extracted via Takens' Delay Embedding Theorem $\mathcal{M}^{\prime} \subseteq \mathbb{R}^{2m+1}$ that enjoys a smooth, invertible and bijective map between manifolds, $\Phi:\mathcal{M}^{\prime}\mapsto\mathcal{M}$, leading to its own filtration of subspaces $\{0\}\subseteq V_0^{\prime}\subseteq V_1^{\prime}\subseteq...\subseteq V^{\prime}_q\subseteq V^{\prime}_{q+1}\subseteq...\subseteq V_K^{\prime}\subseteq\mathbb{R}^{2m+1}$  for which there exist a corresponding set of Lyapunov exponents $ \lambda^{\prime}_1\geq \lambda^{\prime}_2 \geq ... \geq \lambda^{\prime}_K$ produced by the limiting behavior:

\begin{equation}
    \lim_{t\rightarrow\infty}\dfrac{1}{t}\log\|A^{\prime}_t(r^{\prime})v^{\prime}\|= \lambda^{\prime}_q
\end{equation}

where $r^{\prime}$ is a point on the reconstructed manifold $\mathcal{M}^{\prime}$, $A^{\prime}_t(r^{\prime})$ describes the linear map of propagating $r^{\prime}$ over base dynamics $T^{\prime}$ with orbit $T^{j\prime}$ and $v^{\prime} \in V_q^{\prime}\backslash V_{q-1}^{\prime}$ is a vector in the tangent space at $r^{\prime}$.

Takens theorem guarantees that the manifolds $\mathcal{M}$ and $\mathcal{M}^{\prime}$ are diffeomorphically related \cite{Takens}. As we have assumed that base dynamics $T$ and $T^{\prime}$ are ergodic on the invariant measure $\mu$ and $\mu^{\prime}$ respectively, we can apply OMET. Moreover, there exists a bijective map between any given point on the orbits $T^{j}(r)$ and $T^{j\prime}(r^{\prime})$ in the limit that the requirements of Takens' theorem are fulfilled. The induced diffeomorphism between $\mathcal{M}$ and $\mathcal{M}^{\prime}$ holds for their tangent bundles; this ensures invariant splittings defined by the filtration in OMET, $V_q$ and $V_q^{\prime}$, also enjoy a diffeomorphic relation.

Earlier, we argued that the fidelity of latent space manifold extracted via Takens' theorem, $\mathcal{M}^{\prime}$, is a function of stochastic and deterministic forcing as well as inconsistencies in the recording of time series $\mathrm{O}$. Applying this to our understanding of OMET, it follows that reconstruction of any given invariant subspace $V_q$ will then be a function of the deformation to the intrinsic geometry of each subspace induced by noise and forcing. It then follows that the sensitivity of each subspace $V_q^{\prime}$ to infinitesimal errors in our estimation of $v_q^{\prime} \in V_q^{\prime}$ is given by $\lambda_q^{\prime}$. This fact is the basis of our proof in Section \ref{KSE:sec:TandP}. In the next section we explore the interpretation of these Lyapunov exponents and the aggregate effect of their values across all subspaces.

\section{Lyapunov Exponents and the Kolmogorov Sinai Entropy}  \label{KSE:sec:lyap}

The Lyapunov exponents of a dynamical system describe the average exponential rate of divergence or convergence of infinitesimally close trajectories on a phase space describing the evolution of a dynamical system. Empirically, they are extractable by fitting a function of mean distance between trajectories $\delta(t)=\delta(0)e^{\lambda t}$ where $\delta(t)$ describes the mean distance between trajectories at time $t$ with Lyapunov exponent $\lambda$. Positive Lyapunov exponents correspond to exponential divergence in trajectories, negative ones to convergence or dissipation while a zero Lyapunov exponent typically characterizes motion along a limit cycle attractor \cite{LyapEckmann,rosenstein93}.

In the previous section, we discussed how the rate of growth and shrinkage of subspaces for systems satisfying OMET are related to these exponents. They can be extracted directly from time series using the QR-based method of Eckmann \emph{et al.}\ \cite{LyapEckmann} or the small-divergence method of Rosenstein \emph{et al.}\ \cite{rosenstein93}. Now we consider the aggregate effect of these deformations on the time series.

The \textbf{Kolmogorov--Sinai entropy (KSE), $h^{\mathrm{KS}}$,} quantifies the average rate of information production in the system. It is defined as
\begin{equation}
h^{\mathrm{KS}} = \lim_{\epsilon \to 0} \lim_{n \to \infty} \frac{1}{n} H(\epsilon, n),
\end{equation}
where \( H(\epsilon, n) \) is the Shannon entropy of trajectories defined with resolution \( \epsilon \) over \( n \)-step intervals \cite{Walters1982}. However, it is also extractable directly from the Lyapunov exponents. Intuitively, this is because positive Lyapunov exponents describe chaotic behavior, and consequently information production. Ruelle's inequality allows us to bound this directly from the positive Lyapunov exponents of our time series \cite{Ruellesinequality}:

\begin{equation}
   h^{\mathrm{KS}} \leq \sum_{\lambda_i \geq 0} \lambda_i.
\end{equation}

Under appropriate conditions (e.g., for smooth hyperbolic systems with an SRB measure), Pesin's Theorem shows that the Kolmogorov--Sinai entropy equals the sum of positive Lyapunov exponents  \cite{PesinTheorem,Young2002}~\cite{LedrappierYoung1985}.

As the upper bound limit of the KSE is computed from the Lyapunov exponents of a given time series and these are computed by the mean divergence in the distance of infinitesimally close points on the manifold extracted from this time series ($\mathcal{M}^{\prime}$), we hypothesize that manifolds with larger KSEs will be more sensitive to noise, forcing and errors in specifying a given $r^{\prime}\in\mathcal{M}^\prime$\added{.}

\section{Theorem and Proof \label{KSE:sec:TandP}}

Here we prove that time series exhibiting smaller Kolmogorov-Sinai entropies can produce reconstruction of full dimensional dynamics with smaller mean errors than those with larger Kolmogorov--Sinai entropies.

\paragraph{Setup and Assumptions}
Let $\mathbb{S}$ be a dynamical system whose dynamics are chaotic and evolve on the compact invariant  attractor $\mathcal{M}\subseteq\mathbb{R}^{2m+1}$ of box-counting dimension $d_\mathcal{M}$. We fix the embedding dimensionality $m\geq d_\mathcal{M}$ so that $\mathcal{M}$ admits a generic embedding into $\mathbb{R}^{2m+1}$ by Takens' Theorem. Let $\mathrm{O}_i(t), i\in \{1,...,n\} \in{\mathbb{N}}$ be a set of $n$ true scalar time series observables of $\mathbb{S}$. Due to imperfect measurement, external forcing and stochastic noise, the observed time series is then $\tilde{\mathrm{O}}_i(t)=\mathrm{O}_i(t)+\varepsilon_i(t)$ where $\varepsilon_i(t)$ represents the perturbation to observable $i$ at time $t$, with each $\varepsilon_i$ drawn from $\Omega$, noise that is independent of the dynamics on $\mathcal{M}$.

We assume:

\begin{enumerate}
    \item $\varepsilon(t)$ is drawn from some probability distribution $\Omega$. $\Omega$ can be time dependent in the case of deterministic forcing. Let $\|\varepsilon(t)\| \ll \| \mathrm{O}(t)\| $ and consequently the signal to noise ratio, $\mathrm{SNR}:=\sigma_{\mathrm{O}_i}^2/\sigma_{\varepsilon_i}^2\gg 1$. Let this noise be uncoupled with the nonlinear dynamics of $\mathcal{M}.$ $\varepsilon_i$ is statistically independent of the state $r(t)\in\mathcal{M}$ to leading order (weak state-dependence at $\mathcal{O}(\sigma_{\varepsilon_i}^2)$). 
    \item The trajectories of each $\mathrm{O}_i(t)$ produce embeddings $\mathcal{M}_{i}\subseteq\mathbb{R}^{2m+1}$ via Takens' Theorem upon which the time-evolved trajectories on the embedded manifolds, $T_{i}(t)$, exist. \item All of the requirements of Takens' Theorem be fulfilled including sufficient sampling time such that all $\mathcal{M}_i$ will be diffeomorphically related to each other and the intrinsic attractor $\mathcal{M}$.
    \item Each orbit $T_{i}(t)$ on $\mathcal{M}_i$ be ergodic with respect to measure $\mu_i$ pertaining to the dynamics on each manifold. 
    \item Let $K_i$ be the number of distinct Lyapunov exponents of $(T_i,\mu_i)$ with indices indexed $q=1,\ldots,K_i$ with associated lyapunov exponents $\lambda_q^i$. 
    %and $\lambda_q^k$ we implicitly set $K_i=K_k=: K$ (padding the shorter spectrum with $-\infty$ otherwise).}

    \item The cocycle on each $\mathcal{M}_i$ admits an Oseledets splitting $\mathbb{R}^{2m+1} = \bigoplus_q E_q^i$ and noise is isotropic across Oseledets subspaces (expected for uncorrelated thermal/instrument noise). That is to say each Oseledets subspace gets the same order of noise. Projections $\varepsilon_i^{(q)}\in E_q^i$ of $\varepsilon_i$ then satisfy  $\mathbb{E}\|\varepsilon_i^{(q)}\|^2\asymp \sigma/K$ uniformly in $i,q$ where $\sigma=\mathrm{tr}(\,\mathrm{Cov}(\Omega))$, and $\|\varepsilon_i\|^2=\sum_q\|\varepsilon_i^{(q)}\|^2 + o(\sigma)$ in the small-noise regime. 
    \item $\mathcal{M}_i$ is compact. The backmap $\hat{\phi}:\mathcal{M}_i\to\mathcal{M}$ admits a uniform Lipschitz constant $L > 0$ with respect to the Euclidean metrics induced on $\mathcal{M}_i\subseteq\mathbb{R}^{2m+1}$ and $\mathcal{M}\subseteq\mathbb{R}^{2m+1}$.
    \item Candidate observables yield componentwise-ordered Lyapunov spectra: if $h^{KS,UB}_i \leq h^{KS,UB}_k$ then $\lambda_q^i \leq \lambda_q^k$ for every $q\in\{1,\ldots,K\}$.
 \end{enumerate}

$h^{KS,UB}_i:=\sum_{q:\lambda_q^i>0}\lambda_q^i$ is the Ruelle upper bound~\cite{Ruellesinequality} on the effective Kolmogorov--Sinai entropy of $(T_i,\mu_i)$. 

% the corresponding finite-resolution estimator at scale $\|\varepsilon_i\|$ is provided by the Eckmann--Ruelle procedure. Comparisons $h^{KS,UB}_i \leq h^{KS,UB}_k$ are meaningful only when their estimator uncertainties $\Delta h^{KS}$ satisfy $\Delta h^{KS} \ll |h^{KS,UB}_i - h^{KS,UB}_k|$.} 

Let $\epsilon_i$ be the lowest possible mean root mean squared error in reconstruction of all points $r(t)\in\mathcal{M}$ from                 
  $r_i(t)\in\mathcal{M}_i$ performed by any backmapping procedure $\hat{\phi}$ that is locally Lipschitz continuous about any point              
  $r_i(t)\in\mathcal{M}$ with $L>0$.  For prediction horizon $\tau$,                                                                       
  \begin{equation}
   \epsilon_i(\tau) \;=\; \inf_{\hat\phi}\,
     \Big(\mathbb{E}_{r(0)\sim\mu}\,\bigl\|r(\tau)-\hat\phi\!\circ\! T_i(\tau)\,r_i(0)\bigr\|^{2}\Big)^{1/2},
  \end{equation}
  where $r_i(0)\in\mathcal{M}_i$ is the Takens image of $r(0)\in\mathcal{M}$ and the infimum ranges over $\hat\phi\in\mathrm{Lip}_L(\mathcal{M}_i,\mathbb{R}^{2m+1})$, the class of Lipschitz-$L$ backmaps. This class is equicontinuous and uniformly bounded on the compact $\mathcal{M}_i$, hence precompact in $C(\mathcal{M}_i,\mathbb{R}^{2m+1})$ by Arzel\`a--Ascoli, so the infimum is attained. In practice, $\hat\phi$ is realized by a neural-network
  approximator. Smoothness imposes an a posteriori Lipschitz constant which gives an empirical $L$ for a given prediction problem. 

\begin{theorem}
From these assumptions it follows that for any $i,k\in\{1,\ldots,n\}$,
\[
 h^{KS,UB}_i \;\leq\; h^{KS,UB}_k
 \quad\Longrightarrow\quad
 \limsup_{\tau\to\infty}\bigl(\epsilon_i(\tau)-\epsilon_k(\tau)\bigr)\leq 0,
\]
with strict inequality whenever $h^{KS,UB}_i < h^{KS,UB}_k$.
\end{theorem}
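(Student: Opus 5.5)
The plan is to reduce the comparison of $\epsilon_i(\tau)$ and $\epsilon_k(\tau)$ to a comparison of how the initial measurement perturbation $\varepsilon_i$ is amplified along each Oseledets subspace. By Assumption 1 the observed delay vector is $\tilde r_i(0)=r_i(0)+\varepsilon_i$, with $\varepsilon_i$ independent of the state to leading order. The first step is to linearize the propagated error about the true orbit:
\[
T_i(\tau)\tilde r_i(0)-T_i(\tau)r_i(0)=A^i_\tau(r_i(0))\,\varepsilon_i+o(\|\varepsilon_i\|),
\]
where $A^i_\tau$ is the cocycle of Section~\ref{KSE:sec:omet}. Composing with any $\hat\phi\in\mathrm{Lip}_L$ and using the uniform Lipschitz bound of Assumption 7, we get
\[
\|r(\tau)-\hat\phi\circ T_i(\tau)\tilde r_i(0)\|\le \|r(\tau)-\hat\phi\circ T_i(\tau) r_i(0)\|+L\|A^i_\tau\varepsilon_i\|.
\]
Because the Takens map is a diffeomorphism (Assumption 3), the noiseless term can be made zero by taking $\hat\phi$ to be the conjugating map restricted to $\mathcal{M}_i$, provided its Lipschitz constant is at most $L$. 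The infimum in the definition of $\epsilon_i(\tau)$ is attained by Arzel\`a--Ascoli, so we can work with a fixed minimizer. Squaring and taking $\mathbb{E}_{\mu}$ gives $\epsilon_i(\tau)^2\lesssim L^2\,\mathbb{E}\|A^i_\tau\varepsilon_i\|^2$, together with a matching lower bound of the same order obtained from the bi-Lipschitz property of the diffeomorphism on the compact set $\mathcal{M}_i$.

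The second step is to split $\varepsilon_i=\sum_q\varepsilon_i^{(q)}$ along the Oseledets splitting of Assumption 6. By the limit formula of Section~\ref{KSE:sec:omet}, $\|A^i_\tau\varepsilon_i^{(q)}\|=e^{\tau(\lambda_q^i+o(1))}\|\varepsilon_i^{(q)}\|$ for $\mu_i$-a.e.\ base point. Cross terms between subspaces are controlled by the angles between the $E_q^i$, which are subexponential along orbits. Isotropy gives $\mathbb{E}\|\varepsilon_i^{(q)}\|^2\asymp\sigma/K$ uniformly, and independence lets the noise expectation factor from the orbit expectation. This yields
\[
\epsilon_i(\tau)^2 \asymp \frac{L^2\sigma}{K}\sum_{q} e^{2\tau(\lambda_q^i+o(1))},
\]
so that $\frac{1}{\tau}\log\epsilon_i(\tau)\to\lambda_1^i$. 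More usefully, the dominant growth is governed by the expanding subspaces $\lambda_q^i>0$, whose exponents sum to $h^{KS,UB}_i$.

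The third step is the comparison. If $h^{KS,UB}_i\le h^{KS,UB}_k$, then Assumption 8 gives $\lambda_q^i\le\lambda_q^k$ for every $q$. Hence each summand for $i$ is dominated by the corresponding summand for $k$ up to the common constants, and $\limsup_\tau(\epsilon_i-\epsilon_k)\le 0$. When $h^{KS,UB}_i<h^{KS,UB}_k$, at least one positive exponent is strictly smaller, say $\lambda_{q_0}^i<\lambda_{q_0}^k$ with $\lambda_{q_0}^k>0$. That term then grows strictly faster for $k$, and the difference $\epsilon_k-\epsilon_i$ stays bounded away from zero, indeed diverges, along $\tau\to\infty$, which gives the strict inequality.

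The main obstacle is passing from the almost-everywhere asymptotics of Oseledets to control of the expectation $\mathbb{E}_\mu\|A_\tau\varepsilon\|^2$, which requires uniform $o(1)$ errors. A secondary issue is that the linearization is only valid while $\|A_\tau\varepsilon\|$ remains small, yet $\tau\to\infty$. My first attempt would take the small-noise limit $\sigma\to0$ jointly, restricting to $\tau\le c\log(1/\sigma)$. I would use Egorov's theorem to obtain uniform convergence on sets of measure $1-\delta$, and bound the remainder by compactness of $\mathcal{M}_i$, which keeps all errors bounded by $\mathrm{diam}\,\mathcal{M}$. The strict inequality may then hold only in this joint-limit sense, and that is how I would state it.
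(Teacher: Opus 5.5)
\textbf{Same route as the paper, but the comparison step has a genuine gap.} Your argument follows the paper's Steps 2--4: linearize, grow each Oseledets component as $\|\varepsilon_q(0)\|e^{\lambda^i_q\tau}$, add in quadrature, push through the $L$-Lipschitz backmap, and compare termwise via Assumption 8. The failure is in your third step, and the paper shares it: it orders only the upper bounds $L\|\delta^i_{\mathrm{total}}\|$ and then simply asserts $\epsilon_i\le\epsilon_k$.

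Your lower bound does not follow from bi-Lipschitzness of the conjugacy $\phi_i:\mathcal{M}_i\to\mathcal{M}$. The quantity $\epsilon_i(\tau)$ is an infimum over all of $\mathrm{Lip}_L$, and the minimizer is not $\phi_i$. You would need a Bayes-type bound such as $\epsilon_i(\tau)^2\ge\mathbb{E}\,\mathrm{tr}\,\mathrm{Cov}\bigl(r(\tau)\mid\tilde r_i(0)\bigr)$, plus control of how that posterior spread grows.

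Even granting two-sided bounds, they hold only up to observable-dependent constants: the $\asymp$ of Assumption 6, the conditioning of $D\phi_i$, Oseledets angles, and $e^{o(\tau)}$ factors. That gives at best $\epsilon_i\le C\epsilon_k$, not $\limsup_\tau(\epsilon_i-\epsilon_k)\le 0$. When $h^{KS,UB}_i=h^{KS,UB}_k$, Assumption 8 applied both ways forces identical spectra. The theorem applied both ways then demands $\epsilon_i-\epsilon_k\to0$, which is a statement about prefactors that no rate comparison reaches.

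Your own computation gives $\tau^{-1}\log\epsilon_i(\tau)\to\lambda^i_1$. So if $h^{KS,UB}_i<h^{KS,UB}_k$ is realized by a subleading $q_0$ with $\lambda^i_1=\lambda^k_1$, the faster $q_0$-term is exponentially negligible and does not decide strictness. Relatedly, $\mathbb{E}_\mu\|A^i_\tau\varepsilon_i\|^2$ grows at the generalized exponent $\lim_\tau\tau^{-1}\log\mathbb{E}_\mu\|A^i_\tau\|^2$. By Jensen this is $\ge 2\lambda^i_1$, and it is strictly larger whenever finite-time exponents fluctuate. The Egorov exceptional sets dominate the expectation, so your proposed fix cannot recover the rates $2\lambda^i_q$.

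\textbf{The limit $\tau\to\infty$ at fixed noise cannot produce the claimed separation.} The constant backmap $\hat\phi\equiv\mathbb{E}_\mu r$ lies in $\mathrm{Lip}_L$, so $\epsilon_i(\tau)\le(\mathrm{tr}\,\mathrm{Cov}_\mu r)^{1/2}$ for every $\tau$, and $\epsilon_k-\epsilon_i$ cannot diverge. Suppose $(T,\mu)$ is mixing and the noise has a density. Then the Bayes bound above converges to the same value, so $\epsilon_i(\tau)\to(\mathrm{tr}\,\mathrm{Cov}_\mu r)^{1/2}$ for \emph{every} observable. Hence $\limsup_\tau(\epsilon_i-\epsilon_k)=0$, which excludes the strict inequality.

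Your joint regime $\tau\le c\log(1/\sigma)$ does not rescue this. Keeping the linearization valid forces both errors to $0$, so the unnormalized difference again tends to $0$. Only a normalized statement such as $\epsilon_i/\epsilon_k\to 0$ (when $\lambda^i_1<\lambda^k_1$) could survive.

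\textbf{Assumption 3 itself makes the spectra coincide.} The tangent cocycles of $T_i$ and $T_k$ are conjugated by $D(\phi_k^{-1}\circ\phi_i)$, which is bounded with bounded inverse on the compact attractor. So $\lambda^i_q=\lambda^k_q$ for all $q$, and all $h^{KS,UB}_i$ are equal. The paper concedes this in Sect.~\ref{KSE:sec:empirical}, where a diffeomorphism between observables is said to make the true $h^{KS,UB}$ observable-independent. Under the stated hypotheses your case $\lambda^i_{q_0}<\lambda^k_{q_0}$ never occurs. The non-strict claim then reduces to $\epsilon_i-\epsilon_k\to0$ for all pairs, which holds, when it does, by decorrelation rather than by any Lyapunov comparison.
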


\begin{remark}\label{KSE:rmk:absent_iii}
Without componentwise spectrum ordering, $h^{KS,UB}_i$ we rely purely in the sum of positive exponents which need not correspond to equivalent ordering on $\{\epsilon_i(\tau)\}$. 
\end{remark}
\begin{remark}\label{KSE:rmk:regime} The $\limsup$ captures asymptotic separation. Divergence in trajectories is led by the leading Lyapunov exponent until saturation at the attractor diameter $\mathrm{diam}(\mathcal{M})$. The theorem is therefore most informative on horizons where we have not yet reached the attractor saturation time $\tau^{*}\sim(\lambda_{\max}^i)^{-1}\log(\mathrm{diam}(\mathcal{M})/\|\varepsilon_i\|)$. That is to say for $\tau \sim \mathcal{O}(1/\lambda_{\max}^i)$.  If $|h^{KS,UB}_i - h^{KS,UB}_k| \leq \Delta h^{KS}$, the ordering is not predictive.
\end{remark}

\begin{proof}

As the orbits $T_i(t)$ are ergodic with respect to their invariant measures $\mu_i$, Birkhoff's Ergodicity Theorem guarantees that time averages computed along these orbits are equivalent to ensemble averages. This justifies the use of the Kolmogorov--Sinai entropies and the application of Oseledets Multiplicative Ergodic Theorem that follow.

\textbf{Step 1:} Establishing diffeomorphic relationships
\newline

Oseledets Multiplicative Ergodic Theorem (OMET) (Sect.~\ref{KSE:sec:omet}) provides the convergence criterion in the rates of growth or shrinkage of the tangent space (or bundle) at almost every point on the attractor $\mathcal{M} \subseteq \mathbb{R}^{2m+1}$. For some linear operator $A_t(r) = A\bigl(T^{t-1}(r)\bigr) \cdots A\bigl(T(r)\bigr) A(r)$ operating on the orbit $T^j(r)$ acting on vector $r(t)\in\mathcal{M}\subseteq\mathbb{R}^{2m+1}$, OMET tells us that we can subdivide the tangent space of the attractor $\mathcal{M} \subseteq \mathbb{R}^{2m+1}$ into a filtration of subspaces $\{0\}\subseteq V_0\subseteq V_1...V_q\subseteq V_{q+1}...\subseteq V_K\subseteq\mathbb{R}^{2m+1}$for which there exist a corresponding set of Lyapunov exponents $ \lambda_1\geq \lambda_2 \geq ... \geq \lambda_K$ measuring rate of growth or shrinkage of each subspace under the transformation $A$:

\begin{equation}
    \lim_{t\rightarrow\infty}\dfrac{1}{t}\log\|A_t(r)v_q\|= \lambda_q
\end{equation}

where $v \in V_q\backslash V_{q-1}$ is a nonzero vector in subspace $V_q$ but not $V_{q-1}$.

We can equivalently apply OMET to $\mathcal{M}_{i} \subseteq \mathbb{R}^{2m+1}$ constructed from observable $\mathrm{O}_i(t)$ via Takens' Delay Embedding Theorem, leading to its own filtration of subspaces $\{0\}\subseteq V_0^{i}\subseteq V_1^{i}...V_q^{i}\subseteq V_{q+1}^{i}...\subseteq V_K^{i}\subseteq\mathbb{R}^{2m+1}$ for which there exist a corresponding set of Lyapunov exponents $ \lambda^{i}_1\geq \lambda^{i}_2 \geq ... \geq \lambda^{i}_K$.

Takens' theorem guarantees that the manifolds $\mathcal{M}$ and $\mathcal{M}_{i}$ are diffeomorphically related; that is to say there exists a smooth, invertible mapping between the two. Moreover, there exists a bijective map between any given point on the orbits $T(r)$ and $T^{i}(r_{i})$ in the limit that the requirements of Takens' theorem have been fulfilled. The induced diffeomorphism between $\mathcal{M}$ and $\mathcal{M}_{i}$ holds for their tangent bundles; this ensures invariant splittings defined by the filtration in OMET also enjoy a diffeomorphic relation. Consequently the reconstruction of any given invariant subspace $V_q$ will be a function of the deformation to the intrinsic geometry of each subspace induced by noise and forcing. That is to say, the sensitivity of each subspace $V_q^{i}$ to infinitesimal errors in our estimation of $v_q^{i} \in V_q^{i}$ is given by $\lambda_q^{i}$.

\textit{Summary: Provided the requirements of Takens' Delay Embedding Theorem are met, Oseledets multiplicative ergodic theorem guarantees that both the true intrinsic manifold describing the dynamics of $\mathbb{S}$, $\mathcal{M}$ and its reconstruction from time series $\mathrm{O}_i(t)$ are diffeomorphically related as are the subspaces in their OMET filtrations.}

\textbf{Step 2:} Relating Lyapunov exponents to error sensitivity
\newline

Under the noise model of the setup, reconstruction of the embedding $\mathcal{M}_i$ via Takens' Delay Embedding Theorem propagates the perturbations $\varepsilon_i(t)$ into errors in reconstructed states. We denote the true state at time $t$ as $r(t)\in\mathcal{M}$ and the state on the reconstructed attractor $\mathcal{M}_i$ as $r_i(t)$. The pointwise reconstruction error is then:

$$ \epsilon_i(t) = \|r(t) - \hat{r_i}(t)\|$$

where $\hat{r_i}(t)$ is the output of the backmapping from $r_i(t)$ to $r(t)$.

We consider a linearization of these dynamics around a given point $r_i(t)$. Let $A(t)$ be a linear operator propagating $r_i(t)$ to a new $r_i(t+\delta t)$. Then a small perturbation to $r_i(t)$, $\delta(t)$ will result in perturbation $$\delta(t+\tau)\approx A(t)\delta(t)$$.

Given an initial perturbation $\|\delta r(0)\| = \|\varepsilon_i(0)\|$ and assuming a single positive Lyapunov exponent $\lambda$,
$$\|\delta r(\tau)\| \approx \|\varepsilon_i(0)\|\, e^{\lambda \tau}$$%
%\cite{FarmerSidorowich1987}.

Extending this to $\mathcal{M}_i$ with its $K$ Lyapunov exponents, $\lambda_q^i$, corresponding to the $q^{th}$ invariant subspace $V_q^i$, then the sensitivity of dynamics in each subspace is given by:

$$\|\delta^i_q(\tau)\| \approx \|\varepsilon_q(0)\| \, e^{\lambda_q^i \tau}$$

where $\|\varepsilon_q(0)\|$ describes the magnitude of the initial perturbation in subspace $V_q^i$. Summing over all subspaces, the total error is given by the Euclidean distance between the perturbed and unperturbed states:

$$\|\delta_{\text{total}}^i(\tau)\| \approx \sqrt{\sum_{q=1}^K \left(\|\varepsilon_q(0)\| \, e^{\lambda_q^i \tau}\right)^2}$$

This effect is dominated by those subspaces whose $\lambda_q^i>0$ as these experience exponential divergence. 

We can then relate the reconstruction error $ \epsilon_i$ to $\delta_{\text{total}}^i(\tau)$. As the sum of the positive Lyapunov exponents (KSE) decreases, so too does $\delta_{\text{total}}^i(\tau)$. In the limit of Takens' Theorem holding, $\mathcal{M}_i$ and $\mathcal{M}$ enjoy a diffeomorphic relationship. The limitation to learning this mapping with a universal functional approximator in this limit comes from errors in the estimation of any $r_i(t)\in\mathcal{M}_i$. Then the reconstruction error $\epsilon_i$ is a function of $(\delta_{\text{total}}^i(\tau))$.

\textit{Summary: For the noisy time series $\tilde{\mathrm{O}}_i(t)$, points on the reconstructed manifold $\mathcal{M}_i$ will be reconstructed with mean error $\|\delta_{\text{total}}^i(\tau)\| \approx \sqrt{\sum_{q=1}^K \left(\|\varepsilon_q(0)\| \, e^{\lambda_q^i \tau}\right)^2}$} 

\textbf{Step 3:} Error in the backmapping of $r_i(t)$
\newline

There exists a smooth, continuous, invertible and diffeomorphic mapping between the manifolds $\mathcal{M}_i \to \mathcal{M}$ given by Takens' Theorem:

$$\phi: \mathcal{M}_i \to \mathcal{M} $$

approximated by continuous backmapping function $\hat{\phi}: \mathcal{M}_i \to \mathcal{M}$. Then for every $\epsilon > 0 $ $ \exists$ $\delta> 0$ such that for any $r_i(t) \in \mathcal{M}_i$ perturbation $\delta_{\text{total}}^i $ with

$$\|r_i(t) - \bigl(r_i(t) + \delta_{\text{total}}^i\bigr)\| < \delta$$

when $r_i(t)$ and $r_i(t)+\delta_{\text{total}}^i \in \mathcal{M}_i$ are mapped onto $\mathcal{M}$, they will be at some distance

$$\|\hat{\phi}(r_i(t)) - \hat{\phi}\bigl(r_i(t) + \delta_{\text{total}}^i\bigr)\| < \epsilon$$

away from each other. Then since $\hat{\phi}$ is uniformly Lipschitz on the compact manifold $\mathcal{M}_i$ with Lipschitz constant $L > 0$ (an assumption of the proof and generally true for neural network architectures):

$$\|\hat{\phi}(r_i(t)) - \hat{\phi}\bigl(r_i(t) + \delta_{\text{total}}^i\bigr)\| \le L \|\delta_{\text{total}}^i\|$$

Consequently, small errors in the estimation of $r_i(t)$ translate monotonically into small errors in the reconstructed state $\hat{\phi}(r_i(t))$.

\textit{Summary: Small errors in estimation of  $r_i(t)$, $(\delta_{\text{total}}^i(\tau))$ result in reconstruction errors in the estimation of $\hat{\phi}(r_i(t))$ that are monotonically related to $(\delta_{\text{total}}^i(\tau))$}

\textbf{Step 4:} Bounding the mean error, $\epsilon_i$

We define the mean reconstruction error \(\epsilon_i\) as the mean of the RMSEs computed for each frame. That is,

$$\epsilon_i = \frac{1}{N} \sum_{t=1}^N \|r(t) - \hat{r}_i(t)\| = \frac{1}{N} \sum_{t=1}^N \epsilon_i(t)$$

where $N$ is the total number of time frames over which the reconstruction is computed.

Then substituting the error bound for a single time step in \textbf{Step 3}, $\epsilon_i $ evaluates to

$$\|r(t) - \hat{r}_i(t)\| = \|\hat{\phi}(r_i(t)) - \hat{\phi}(r_i(t) + \delta_{\text{total}}^i(t))\| \le L\, \|\delta_{\text{total}}^i(t)\| $$

The overall reconstruction error \(\epsilon_i\) can be bounded by the average of these local errors:

$$\epsilon_i \le \frac{L}{N} \sum_{t=1}^N \|\delta_{\text{total}}^i(t)\| $$

In \textbf{Step 2}, we showed that $$\|\delta_{\text{total}}^i(\tau)\| \approx \sqrt{\sum_{q=1}^K \left(\|\varepsilon_q(0)\| \, e^{\lambda_q^i \tau}\right)^2}$$ 

By assumption, $h^{KS,UB}_i \leq h^{KS,UB}_k$ implies $\lambda_q^i \leq \lambda_q^k$ componentwise and $|\varepsilon_i^{(q)}|$ are comparable across observables, justifying a term-by-term inequality. Consequently, $\|\delta_{\text{total}}^i(t)\|$ is componentwise-monotone in the positive $\lambda_q^i$. This implies then that  $h^{KS,UB}_i = \sum_{q: \lambda_q^i > 0} \lambda_q^i$ inherits that monotonicity. Then, if

$$h^{KS,UB}_i \leq h^{KS,UB}_k$$
then

$$\epsilon_i\leq \epsilon_k$$

\textbf{Conclusion:} Then if $h^{KS,UB}_i \leq h^{KS,UB}_k$, $\epsilon_i\leq \epsilon_k$.

\end{proof}

\subsection{Limitations}

We recognize the following limitations to this theorem and proof. We assume that errors along different invariant subspaces combine in a Euclidean manner. This assumption of local linearizability will break down under strong enough nonlinearity or large enough noise \cite{strogatz,KantzSchreiber2004}. We have treated noise as being separable from dynamics. In many cases this assumption holds, however if noise interacts with the system's nonlinear dynamics, the linearized error propagation may not be sufficient to bound errors in recorded $r_i(t)$. Examples of this include accrued errors from the propagation of differential equations over long time periods with numerical means \cite{frenkel2001understanding} and experimental observations with error that is a function of the recorded observable such as single-molecule F\"orster resonance energy transfer (smFRET)~\cite{Roy2008rr}.

We reiterate that the proof applies to functional approximators respecting this condition of local Lipchitz continuity. We note that this condition is required for any smooth and diffeomorphic map between $\mathcal{M_i}$ and $\mathcal{M}$ \cite{rudin1976pma,spivak1965com}. While we cannot make such a guarantee for any arbitrary universal functional approximator, if such an approximator is close to the true map, it must also respect this condition of a smooth and diffeomorphic mapping as guaranteed by Takens' Theorem.

Furthermore, observables with comparable $h^{KS,UB}$ but differently-concentrated spectra (i.e.\ different distributions of positive Lyapunov exponents) can produce divergent reconstruction errors. The assumption of componentwise ordering --- $h^{KS,UB}_i \leq h^{KS,UB}_k$ implies $\lambda_q^i \leq \lambda_q^k$ for all $q$ --- causes this convexity slack to vanish, and partial violations of this assumption bound our empirical correlations below $\rho = 1$. 

Finally, as noted earlier, under appropriate conditions (e.g., for smooth hyperbolic systems with an SRB measure), Pesin's Theorem shows that the Ruelle Inequality actually becomes an equality \cite{Ruellesinequality,PesinTheorem,Young2002}.

\section{Empirical Validation}\label{KSE:sec:empirical}

The mathematical proof provided above is illustrative of the power that observable choice has in modeling the dynamics of arbitrary dynamical systems under a series of modest assumptions. We have shown that the $h^{KS,UB}_i$ associated with arbitrary scalar observations of chaotic systems can be used to select optimal observables to model the dynamics of the driving process for any smooth reconstruction method. In this section we demonstrate this method and its limits in sample systems of increasing complexity parametrized both by intrinsic dimensionality and noise (Fig.~\ref{fig:attractor_deformation}).

\begin{figure*}[t]
  \centering
  \includegraphics[width=\textwidth]{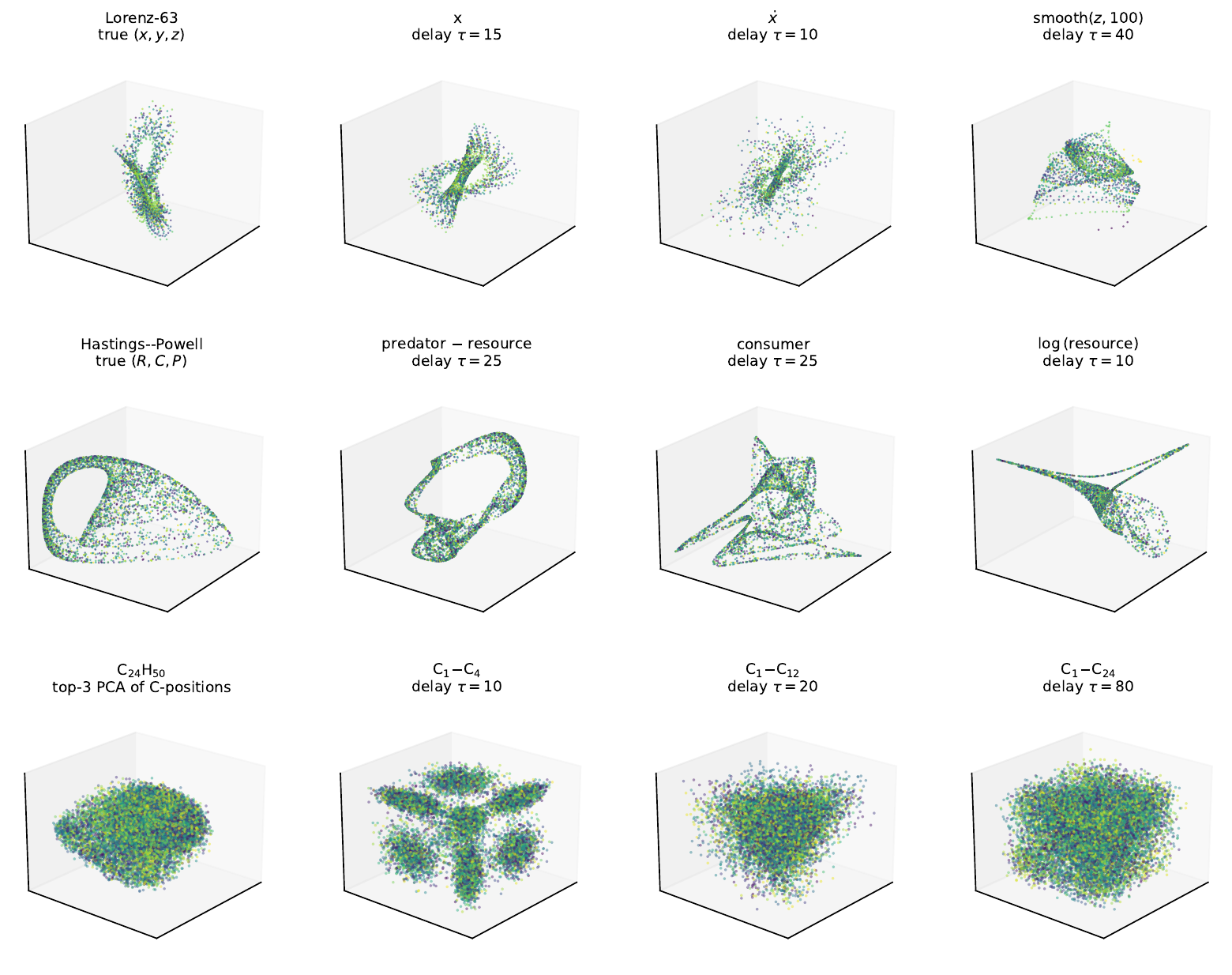}
  \caption{Representations of full dynamics and Takens-reconstructed attractors of the Lorenz-63, Hastings--Powell and Tetracosane systems with various observables. While all share underlying dynamics from Takens' Theorem, different observable choices visibly alter attractor geometry bounded by $h^{KS,UB}$ according to this theorem.}
  \label{fig:attractor_deformation}
\end{figure*}

\subsection{Systems and Parameters}

As this is an empirical test of the theorem, consequently we are using estimations of KSE and the Lyapunov spectrum estimated by the Eckmann-Ruelle algorithm \cite{LyapEckmann} using the nolds package \cite{nolds}. We perform the reconstructions using the TAR pipeline \cite{topel2025}, a neural-network--based reconstruction algorithm for arbitrary time series. We use the 3 examples here to present a variety of regimes. For systems where our sampling is so complete that there exists a diffeomorphism between our observables, the true $h^{KS,UB}$ values are attainable from any arbitrary observable and hence the theorem predicts no ordering. We show this with Lorenz-63 ($\sigma{=}10$, $\rho{=}28$, $\beta{=}8/3$), a canonical low-dimensional chaotic attractor. Secondly, at a sufficiently high level of noise, all observables saturate to the noise signal, destroying any signal from empirical measurement of the KSE. We demonstrate this with the Hastings--Powell three-species food chain \cite{hastings1991chaos}, a 3D model of ecological chaos subject to a variety of Gaussian noise levels. Finally, we introduce a molecular dynamics trajectory of Tetracosane ($\mathrm{C}_{24}\mathrm{H}_{50}$), a complex molecular system at experimental scale, to demonstrate the theorem's applicability to real world problems. 

In order to test this theory, we must introduce a number of observations. For the Lorenz-63 system, we have 20 observables: $x$, $y$, $z$, smoothed-$x$ (windows 5, 25, 100), smoothed-$z$ (windows 5, 25, 100), $\dot x$, $\dot z$, $\ddot x$, $x^2$, $y^2$, $z^2$, $\log(1{+}z)$, $x{+}y$, radial $\sqrt{x^2{+}y^2{+}z^2}$, $xy$, $xz$. For Hastings--Powell we have 20 observables: resource, consumer, predator, $\log$ of each species, smoothed-resource (windows 5, 25, 100), smoothed consumer and predator (window 25), $\dot R$, $\dot C$, $\dot P$, total biomass, $\log$ of total biomass, resource$^2$, predator $-$ resource, resource$\cdot$consumer, predator$/$consumer. Each system was swept densely across an SNR range tailored to its noise sensitivity. For Lorenz-63 and Hastings--Powell we sweep 46 SNR levels: a coarse high-SNR tail $\{\infty, 1000, 700, 500, 400, 300, 250, 200, 150\}$\,dB and a denser grid covering $100$\,dB to $-10$\,dB at $\sim$3\,dB spacing, traversing the noiseless, discrimination, and saturation regimes. For $\mathrm{C}_{24}\mathrm{H}_{50}$, our 21 observables are intramolecular distances $\mathrm{C}_1$--$\mathrm{C}_n$ for $n{=}4,\ldots,24$; the MD trajectory itself already carries thermal noise so we omit the $\mathrm{SNR}=\infty$ baseline and instead sweep 39 levels from $90$\,dB down to $-30$\,dB, extending lower than the ODE systems to resolve the saturation transition (which occurs at substantially lower SNR for this high-dimensional system). Negative-SNR rows are retained in the deposited CSVs but correspond to a regime in which noise exceeds signal and falls outside the discrimination regime that is the focus of our analysis. The shortest distances $\mathrm{C}_1$--$\mathrm{C}_2$ (the C--C bond length) and $\mathrm{C}_1$--$\mathrm{C}_3$ (the 1--3 distance fixed by bond angle) are excluded from the analysis: they probe sub-picosecond bond-stretching and angle-bending modes rather than the slow nanosecond chain reorganizations that the rest of the chain-distance observables capture, and the Eckmann--Ruelle estimator cannot resolve their KSE within our sampling rate. Details of the molecular dynamics simulation are given in Ref.~\cite{ferguson2010systematic}. We perform 3 trials of each to bootstrap error estimates.

\paragraph{Trajectory lengths and integrators.} Lorenz-63 and Hastings--Powell were integrated with the explicit Runge--Kutta DOP853 scheme \cite{HairerNorsettWanner1993} in \texttt{scipy.integrate.solve\_ivp} (rtol $10^{-10}$, atol $10^{-12}$) from initial condition $(1,1,1)$ for Lorenz over $t\in[0, 200]$ with $\Delta t = 0.01$ (20{,}000 samples; first 5{,}000 discarded as transient), and from $(0.7, 0.2, 8.0)$ for Hastings--Powell over $t\in[0, 20{,}000]$ with $\Delta t = 1$ (20{,}000 samples; first 5{,}000 discarded). The $\mathrm{C}_{24}\mathrm{H}_{50}$ trajectory of Ref.~\cite{ferguson2010systematic} is a 500\,ns molecular-dynamics simulation in explicit solvent; we read it at $\mathrm{stride}=25$ to give 20{,}000 samples for direct comparison with the ODE systems.

Code is available on GitHub at \texttt{maxtopel/TAR} while scripts and parameters are available at \texttt{maxtopel/KSE\_codebase}. All reconstruction runs were executed on Northwestern University's Quest HPC cluster. Each job in this study used 4 CPU cores and 16--32 GB RAM.

\subsection{Reconstruction with TAR}

We use the Takens Reconstruction (TAR) pipeline to take time series observations of arbitrary dynamical systems and reconstruct their full dimensional dynamics given some training data \cite{topel2025}. First we standardize observables (fit on first half of the data), then compute the time-delay $\tau$ as the first crossing of the autocorrelation function below a fixed threshold of $0.2$ (a standard practitioner approach to delay-embedding parameter selection \cite{KantzSchreiber2004}) and embedding dimensionality via the Cao $E1(d)$ criterion \cite{Cao1998} to get Takens embeddings. Latent space representations of these embeddings are produced via diffusion maps over some subsample of the training data and the rest are projected on to the latent space manifold extracted via diffusion map. Backmapping to the full dimensional state is performed via \texttt{tir.train\_and\_evaluate\_model2} using a neural network of architecture $5{\times}50$ ReLU trained over 250 epochs with Adam optimizer, learning rate $\eta=10^{-2}$ and 3-fold cross-validation. The KSE upper bound $h^{KS,UB}_i$ was estimated via \texttt{nolds.lyap\_e} (Eckmann--Ruelle QR algorithm \cite{LyapEckmann}) on the scalar time series.

\subsection{Results}

\begin{figure*}[t]
  \centering
  \includegraphics[width=\textwidth]{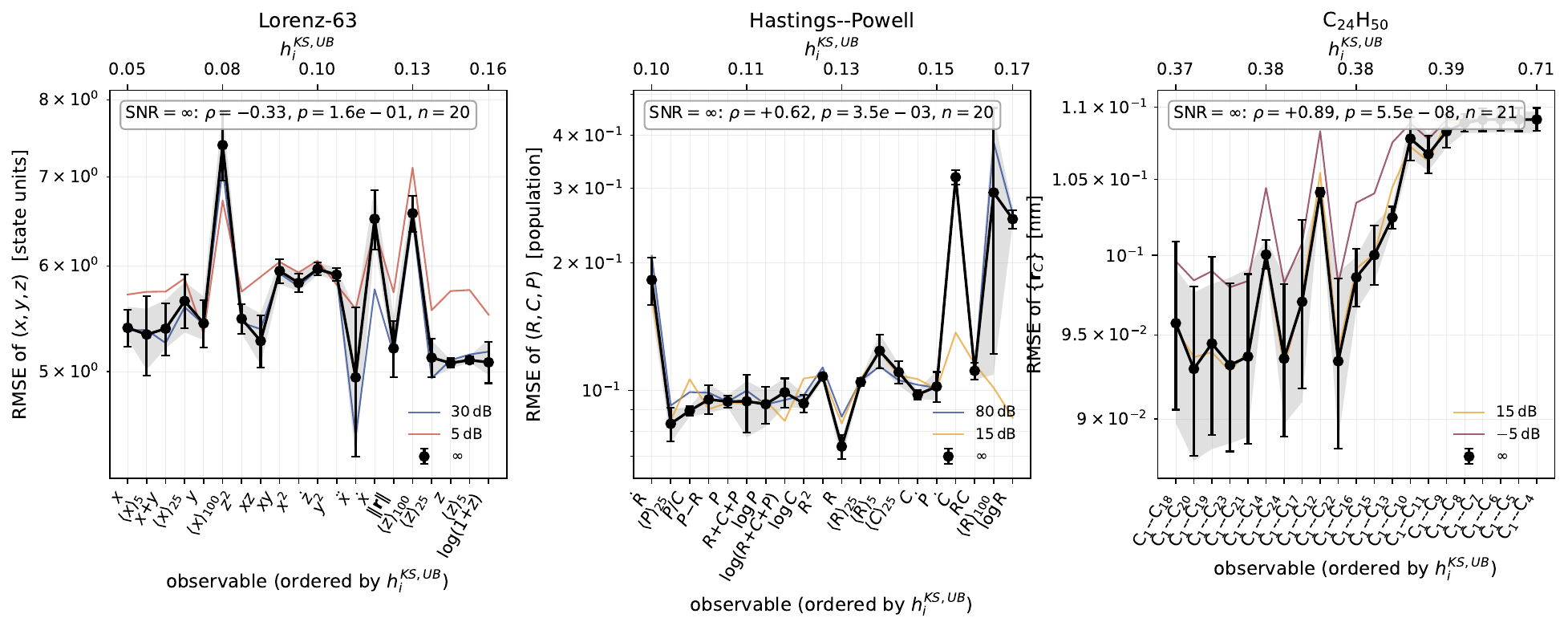}
  \caption{Reconstruction RMSE as a function of scalar observables ordered by estimated Kolmogorov--Sinai upper bound $h^{KS,UB}_i$. Per panel three SNR levels are shown, chosen to span each system's regimes: noiseless ($\infty$, black, with shaded 95\,\% bootstrap CI), the SNR where Spearman $\rho$ peaks for that system, and an SNR in the system's saturation regime. \emph{Lorenz-63} (left, 20 observables): $\infty$ (observable equivalence, $\rho{=}{-}0.33$, $p{=}0.16$), $30$\,dB (discrimination peak, $\rho{=}{+}0.62$, $p{=}0.0039$), $5$\,dB (saturation, $|\rho|{<}0.2$). \emph{Hastings--Powell} (middle, 20 observables): $\infty$ (clean discrimination, $\rho{=}{+}0.62$, $p{=}0.0035$), $80$\,dB (discrimination peak), $15$\,dB (mid-noise breakdown, $\rho{=}{-}0.45$). \emph{$\mathrm{C}_{24}\mathrm{H}_{50}$} (right, 21 chain-distance observables): $\infty$ ($\rho{=}{+}0.89$, $p{=}5.5{\times}10^{-8}$), $15$\,dB (sharpened peak, $\rho{=}{+}0.97$, $p{=}10^{-13}$), $-5$\,dB (saturation onset --- the noise-amplified RMSE at high-KSE observables visibly elevates the curve). RMSE axes are log-scale.}
  \label{fig:rmse_vs_kse}
\end{figure*}

Per-observable RMSE for representative SNR levels is shown in Fig.~\ref{fig:rmse_vs_kse}, with each panel ordered by estimated $h^{KS,UB}_i$. The three systems each illustrate a different combination of the three regimes predicted by the theorem (observable equivalence, discrimination, and saturation; cf.\ Remark~\ref{KSE:rmk:regime} and Fig.~\ref{fig:bell_curves}).

\textbf{Lorenz-63 --- traversing all three regimes.} At clean SNR, near-complete sampling of the low-dimensional attractor produces a diffeomorphism between the manifolds reconstructed from each observable, placing the system in the \emph{observable-equivalence regime}: all coordinate choices yield uniformly good Takens reconstructions, and we find no significant relationship between KSE and reconstruction error ($\rho{=}{-}0.33$, $p{=}0.16$, $n{=}20$). Adding moderate noise breaks this equivalence: the reconstruction quality of high-KSE observables degrades faster than that of low-KSE ones, and the system enters the \emph{discrimination regime}, recovering the rank ordering predicted by the theorem ($\rho_{30\,\mathrm{dB}}{=}{+}0.62$, $p{=}0.0039$; $\rho_{20\,\mathrm{dB}}{=}{+}0.46$, $p{=}0.043$). At higher noise ($\mathrm{SNR}\le 10$\,dB), the correlation flattens into the \emph{saturation regime}, with $|\rho|<0.2$ across SNRs $\in\{10, 5, 0\}$\,dB. There the Eckmann--Ruelle Lyapunov-exponent estimates no longer reflect the true KSEs of the underlying dynamics; the $h^{KS,UB}$ axis collapses, exactly the regime condition $|h^{KS,UB}_i - h^{KS,UB}_k| \lesssim \Delta h^{KS}$ flagged in Remark~\ref{KSE:rmk:regime}.

\textbf{Hastings--Powell --- discrimination at clean, then saturation.} The 3D ecological dynamics are diverse enough that no observable-equivalence regime exists at clean SNR: candidate observables already differ meaningfully in $h^{KS,UB}$ and the system sits in the discrimination regime, with a statistically significant positive correlation matching the theorem ($\rho{=}{+}0.62$, $p{=}0.0035$, $n{=}20$). Any added noise drives the system monotonically toward saturation: the rank ordering collapses by 30\,dB ($\rho{=}{-}0.05$, $p{=}0.84$) as additive noise smooths the RMSE landscape across observables.

\textbf{$\mathrm{C}_{24}\mathrm{H}_{50}$ --- discrimination across the entire physical SNR range.} The molecular system is the practical case of interest: a high-dimensional attractor reconstructed from observables (intramolecular distances) accessible via smFRET experiments. In principle all observations describe the same underlying molecular dynamics, but the high dimensionality and absence of canonical-coordinate equivalence place every observable in the discrimination regime even at clean SNR. We find a Spearman rank correlation $\rho(h^{KS,UB},\mathrm{RMSE}) = {+}0.892$ ($p{=}5.5{\times}10^{-8}$, CI: $[{+}0.67, {+}0.97]$) over all 21 observables, with the shortest distance $\mathrm{C}_1$--$\mathrm{C}_4$ having the largest KSE (0.71) and largest RMSE (0.109\,nm) and the longest $\mathrm{C}_1$--$\mathrm{C}_{24}$ the smallest KSE (0.38) and smallest RMSE (0.094\,nm). Strikingly, the correlation \emph{sharpens monotonically} with added measurement noise --- $\rho_{30\,\mathrm{dB}}{=}{+}0.92$, $\rho_{20\,\mathrm{dB}}{=}{+}0.94$, $\rho_{15\,\mathrm{dB}}{=}{+}0.97$, peaking at $\rho_{10\,\mathrm{dB}}{=}{+}0.97$ (all $p<10^{-9}$) --- because additional noise selectively penalizes high-KSE observables, amplifying rather than degrading the theorem's signal. The correlation remains in this near-perfect plateau ($\rho \gtrsim 0.95$) down to $\mathrm{SNR}\approx -3$\,dB, and only enters saturation when noise exceeds signal by $\gtrsim 10\times$ ($\mathrm{SNR}\lesssim -3$\,dB), where the KSE estimator is finally overwhelmed.

\begin{figure*}[t]
  \centering
  \includegraphics[width=0.85\textwidth]{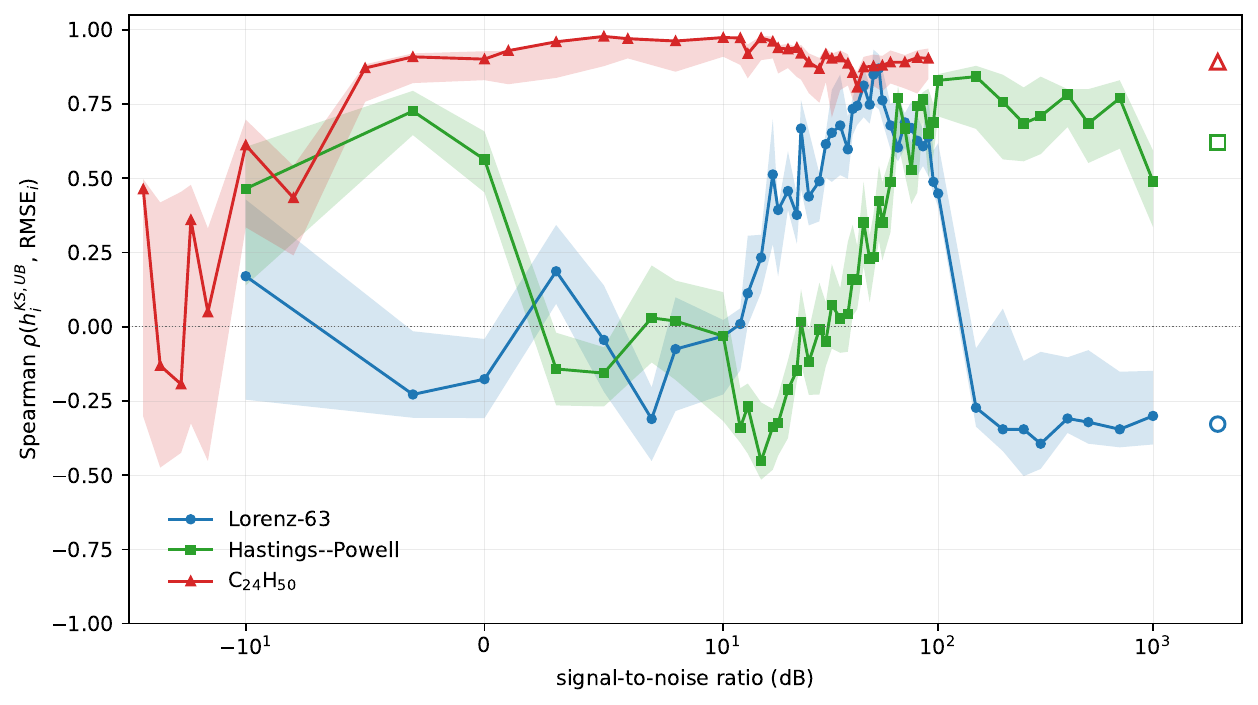}
  \caption{Spearman rank correlation $\rho(h^{KS,UB}_i, \mathrm{RMSE}_i)$ as a function of measurement-noise SNR for each system, with bootstrap 95\% confidence intervals. The theorem predicts three regimes, demonstrated by the three systems. At high SNR ($\gtrsim 100$\,dB) Lorenz-63 and Hastings--Powell demonstrate a strong degree of observable equivalence, a regime in which all observables construct faithful Takens manifolds and consequently perform reconstruction well. As noise increases, we enter a discrimination regime where the theory is most informative. It is here that the effect of noise on attractor representations is addressed by this work. At low SNR the KSE estimator saturates and the correlation collapses to zero. $\mathrm{C}_{24}\mathrm{H}_{50}$, lacking the canonical-coordinate equivalence regime entirely, maintains $\rho \gtrsim 0.9$ across an enormous SNR range and only saturates once noise exceeds signal by $\gtrsim 10\times$ ($\mathrm{SNR}\lesssim -3$\,dB). Open markers at the right edge denote the noiseless ($\mathrm{SNR}=\infty$) limit.}
  \label{fig:bell_curves}
\end{figure*}

\textbf{Synthesis.} Across all three systems (Fig.~\ref{fig:bell_curves}), the empirical evidence recapitulates the theorem and its regime condition (Remark~\ref{KSE:rmk:regime}): the KSE is a reliable observable-selection criterion in the discrimination regime, where candidate observables differ meaningfully in $h^{KS,UB}$ and the estimator resolves those differences. The criterion becomes inactive when observables are too similar (the equivalence regime, only seen for low-dimensional well-sampled systems) and breaks down when noise overwhelms the estimator (the saturation regime, reached at different SNRs for different systems). The high-dimensional MD case ($\mathrm{C}_{24}\mathrm{H}_{50}$) is the practitioner-relevant scenario: the discrimination regime occupies the entire physically meaningful SNR range and the criterion is at its strongest under realistic measurement noise.

\section{Conclusion} \label{KSE:sec:conc}

We have now proven that manifolds constructed from time series with greater Kolmogorov-Sinai entropies admit a higher upper bound on mean reconstruction error under modest technical conditions. Section~\ref{KSE:sec:empirical} demonstrates the practical implications of our proof in noiseless and noisy conditions. The 21 observables of $\mathrm{C}_{24}\mathrm{H}_{50}$ demonstrate a Spearman rank correlation between $h^{KS,UB}$ and reconstruction RMSE of $+0.89$ ($p{=}5.5{\times}10^{-8}$) at clean SNR, sharpening to $+0.97$ at $10$\,dB SNR making the theorem more predictive under realistic measurement noise in this high-dimensional regime. The toy Hastings--Powell and Lorenz systems demonstrated the observable-equivalence regime, where coordinates describe equivalent representations, and the differential effect of noise as a function of the equivalence of those representations (Lorenz moderate-noise Spearman $\rho_{30\,\mathrm{dB}}{=}{+}0.62$ at $p{=}0.0039$; Hastings--Powell clean-noise $\rho{=}{+}0.62$ at $p{=}0.0035$). The three systems span complementary regimes of the theorem: an observable-equivalence regime where it is inactive (Lorenz, clean), a discrimination regime where it is most informative (Hastings--Powell, clean; C\textsubscript{24}H\textsubscript{50}, all SNRs), and a saturation regime where the estimator can no longer resolve KSEs (all systems at deep noise). This proof represents an analytical description for a physical intuition surrounding observable selection and provides a method to do observable choice optimization \emph{a priori} to guide modeling choice. Its generality allows it to be useful for arbitrary dynamical systems before performing costly modeling.

Bounding problems are abound at the intersection of embedding theorems and statistical learning. We recognize that the requirement of ergodicity for the use of OMET and Takens is a limitation to generality. However, as was demonstrated by our empirical results, it is precisely the limit in which all observations do NOT form equivalent embeddings of dynamics that this proof is most powerful. In future work, we seek to extend this proof to describe the size of these reconstruction errors as a function of sampling time. Moreover, we would like to form a deeper understanding of the nature of topographical deformations in manifolds constructed from Takens' Theorem. Advances in these and similar areas will provide a mathematical foundation for observable selection in complex systems where such problems are beyond physical intuition.

\section*{Acknowledgments}%
The author thanks Andrew L.\ Ferguson (University of Chicago) for the $\mathrm{C}_{24}\mathrm{H}_{50}$ molecular-dynamics trajectory and for foundational discussions on Takens reconstruction over the years. The author also thanks Madhav Mani (Northwestern University) for the continuing education on dynamics that have encouraged this work. This research was supported in part through the computational resources and staff contributions provided for the Quest high-performance computing facility at Northwestern University which is jointly supported by the Office of the Provost, the Office for Research, and Northwestern University Information Technology.

\section*{Data Availability}%
The TAR pipeline used for the empirical validation in Sect.~\ref{KSE:sec:empirical} is openly available at GitHub \texttt{maxtopel/TAR}~\cite{topel2025}. Scripts, parameter files, and result CSVs for Lorenz-63, Hastings--Powell, and $\mathrm{C}_{24}\mathrm{H}_{50}$ are archived at \texttt{https://github.com/maxtopel/KSE\_codebase} and on Zenodo at \href{https://doi.org/10.5281/zenodo.19899204}{10.5281/zenodo.19899204}. The $\mathrm{C}_{24}\mathrm{H}_{50}$ molecular dynamics trajectory was previously published in Ref.~\cite{ferguson2010systematic}.

\bibliographystyle{apsrev4-2}
\bibliography{ref}% Produces the bibliography via BibTeX.

\end{document}